\newtheorem{theorem}{Theorem}[section]
\newtheorem{lemma}[theorem]{Lemma}
\begin{document}
\title{Learning dynamic quantum circuits for efficient state preparation}

\author{Faisal Alam}
\affiliation{Department of Physics, University of Illinois at Urbana-Champaign, IL 61801, USA}
\affiliation{IQUIST and Institute for Condensed Matter Theory and NCSA Center for Artificial Intelligence Innovation,
University of Illinois at Urbana-Champaign, IL 61801, USA}
\affiliation{Theoretical Division, Los Alamos National Laboratory, Los Alamos, NM 87545, USA}

\author{Bryan K. Clark}
\affiliation{Department of Physics, University of Illinois at Urbana-Champaign, IL 61801, USA}
\affiliation{IQUIST and Institute for Condensed Matter Theory and NCSA Center for Artificial Intelligence Innovation,
University of Illinois at Urbana-Champaign, IL 61801, USA}

\begin{abstract}
Dynamic quantum circuits (DQCs) incorporate mid-circuit measurements and gates conditioned on these measurement outcomes. DQCs can prepare certain long-range entangled states in constant depth, making them a promising route to preparing complex quantum states on devices with a limited coherence time. Almost all constructions of DQCs for state preparation have been formulated analytically, relying on special structure in the target states. In this work, we approach the problem of state preparation variationally, developing scalable tensor network algorithms which find high-fidelity DQC preparations for generic states. We apply our algorithms to critical states, random matrix product states, and subset states. We consider both DQCs with a fixed number of ancillae and those with an extensive number of ancillae. Even in the few ancillae regime, the DQCs discovered by our algorithms consistently prepare states with lower infidelity than a static quantum circuit of the same depth.  Notably, we observe constant fidelity gains across system sizes and circuit depths. For DQCs with an extensive number of ancillae, we introduce scalable methods for decoding measurement outcomes, including a neural network decoder and a real-time decoding protocol. Our work demonstrates the power of an algorithmic approach to generating DQC circuits, broadening their scope of applications to new areas of quantum computing. 
\end{abstract}

\maketitle

\section{Introduction}
The axioms of quantum mechanics allow two means to evolve a quantum state: unitary operators and measurements. A circuit-based quantum computer utilizes local unitary operators, or gates, and only performs measurements at the end of an algorithm. On the other hand, a measurement-based quantum computer \cite{mbqc} begins with a highly entangled resource state and applies only a sequence of measurements to it. A dynamic quantum circuit (DQC) incorporates both paradigms by allowing measurements in the middle of a gate-based circuit, followed by further gates chosen adaptively based on classical processing of the measurement outcome. DQCs are also sometimes referred to as circuits assisted by local operations and classical communication (LOCC) or circuits with measurement and feedforward. 
\\ \\
Originally introduced in the context of the quantum teleportation protocol \cite{nielsen_chuang}, dynamic quantum circuits are a crucial milestone on the roadmap to fault-tolerant quantum computers \cite{qec_old, qec_new} since quantum error correction schemes rely on applying correction gates conditioned on syndrome measurements. Dynamic circuits have recently also found application in algorithms like quantum singular value transform and quantum Fourier transform \cite{noise_limiting, qsvt, constant_exp, qft}. Several quantum platforms have successfully demonstrated the implementation of these circuits \cite{qec_expr1, qec_expr2, qec_expr3}.
\\ \\ 
An exciting area of applications for DQCs is in the preparation of long-range entangled states. The Lieb-Robinson bound \cite{lieb_robinson, qi_qm} tells us that with local gates, the circuit depth required to prepare such states scales with system size. Since near-term quantum computers have limited coherence times, and therefore, can only execute shallow circuits reliably, the system sizes one can prepare are severely restricted. However, recent work has shown that certain long-range entangled states, like the GHZ state, topologically ordered states \cite{aklt_from_locc, phases_from_locc, lre_from_locc, topo_from_locc, spt_locc} and matrix product states with certain symmetries \cite{mps_from_locc, mps_peps_from_locc, tn_from_locc, log_depth_mps}, can be prepared by dynamic quantum circuits with depth independent of system size. This is because classical communication of measurement outcome across the system violates locality and, therefore, makes the Lieb-Robinson bound inapplicable. Proposals for DQC preparation of states have also been implemented experimentally \cite{expr1, expr2}.
\\ \\ 
Prior work has exploited circuit identities, the stabilizer formalism and group theory to construct DQCs. Such approaches are successful only for very structured states and require independent effort for each new class of states. Furthermore, there is no guarantee that the construction found is optimal in the presence of hardware constraints like gate sets, noise and a limited number of qubits.
\\
\begin{figure*}[!t]
  \centering
  \includegraphics[width=\linewidth]{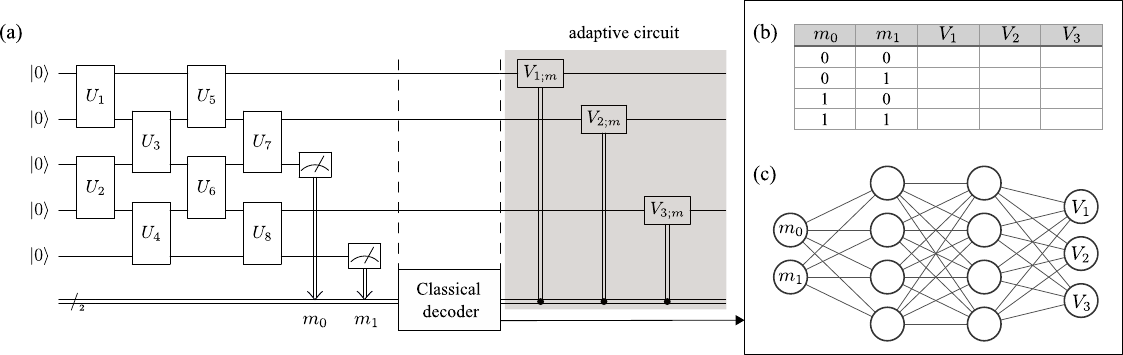}
  \caption{A dynamic quantum circuit ansatz: (a) Variational gates, $U_j$, are applied, followed by measurement on ancilla qubits. The measurement outcomes, $m$, are then passed to a variational decoder which recommends additional gates, $V_{k;m}$. (b) A look-up table decoder tabulates gates with parameters optimized for each measurement outcome. (c) A feedforward neural network decoder, with trained weights and biases, generates appropriate gates based on measurement outcomes.}
  \label{ansatz}
\end{figure*}
\\
In this work we propose algorithmic solutions to the problem of preparing states with dynamic circuits. We train parameterized DQCs to maximize fidelity with a target state. Our algorithms are classical and are scalable due to extensive use of tensor network methods. The algorithms can accept generic target states, although  DQCs only have an advantage over static circuits when the target state has small entanglement and high complexity. Our algorithms can be easily modified to take into account hardware constraints, but we do not fully explore this possibility in the current work. 
\\ \\ 
The first algorithm we present optimizes DQCs that incorporate a few mid-circuit measurements. We apply this algorithm to three classes of target states not previously explored in the dynamic circuit literature: critical states, generic matrix product states and subset states. Our results show that DQCs achieve significantly lower infidelity than static circuits of the same depth. While we do not expect DQCs with few mid-circuit measurements to prepare states in constant depth, the success of our algorithm indicates that limited measurement and feedforward can substantially improve state preparation fidelities.
\\ \\ 
We then introduce two protocols for training DQCs with an extensive number of mid-circuit measurements. These protocols are designed to manage the exponential growth of measurement outcomes with system size. We benchmark their performance using the paradigmatic GHZ state.
\\ \\ 
Variational algorithms with quantum circuits \cite{vqa} have seen numerous applications in the noisy-intermediate scale quantum computing (NISQ) era, including ground state search \cite{vqe_orig, vqe_review}, compilation of unitaries \cite{fumc, fisc}, real and imaginary time evolution \cite{pvqd, tubos}, and machine learning \cite{qnn_new, qnn_old}. We expect our work to lead to further studies of how dynamic quantum circuits can provide new algorithms for such applications. 

\section{Dynamic quantum circuits}

In order to prepare an $n$ qubit quantum state, we use dynamic quantum circuits (see Fig \ref{ansatz}) with $n$ system qubits and some number of ancilla qubits. After a few layers of entangling gates are applied, the ancillae are measured. The measurement outcome is passed to a classical function, called a decoder, which returns a description of gates to apply to the unmeasured qubits, often in the form of rotation angles. 
\\ \\ 
The two trainable components in a DQC ansatz are the parametrized gates prior to measurements and the classical decoder. The number of ancillae used is an important architectural hyperparameter of dynamic circuits and influences the choice of classical decoder. In Section \ref{few_ancillae}, we present a look-up table decoder designed for DQCs with a limited number of ancillae. In Section \ref{extensive_ancillae}, we explore a neural network decoder and a real-time decoding protocol tailored for DQCs with an extensive number of ancillae.   
\\
\\
Note that since dynamic quantum circuits differ from static ones only by the presence of additional local operations and classical communication (LOCC), they can generate at most the same bipartite entanglement as static circuits of the same depth \cite{locc_basics}. As a result, DQCs offer an advantage only for complex quantum states whose preparation is not limited by entanglement generation. Here, we define complexity of a state as the minimum circuit depth required for its preparation. Consider, for instance, the $n$ qubit GHZ state: despite having only has $\mathcal{O}(1)$ entanglement entropy across any bipartition, its complexity scales as $\mathcal{O}(n)$. It is in the presence of such a gap between complexity and entanglement that DQCs excel. Since classical tensor network methods are similarly useful in efficiently representing complex states with low entanglement, we propose using tensor network algorithms to optimize DQCs for state preparation.  

\section{DQC with few ancillae \label{few_ancillae}}

The simplest decoder for a dynamic quantum circuit is a look-up table that lists the post-measurement circuit, or equivalently its parameters, for each measurement outcome. The decoder is then trained by optimizing each post-measurement circuit such that it rotates the corresponding post-measurement state to the target state. However, note that for $r$ ancilla qubits, the number of measurement outcomes is $2^r$. Therefore, both the size of the look-up table and the number of parameters we must optimize over grows exponentially with $r$ and training is feasible only when the number of ancillae remain few. However, we will soon present result which demonstrates that DQCs already offer substantial advantage over static circuits in this regime. 

\subsection{Optimization method}

In this section we present a classical tensor network algorithm to optimize DQCs with few ancillae.   
\\
\begin{figure*}[!t]
  \centering
  \includegraphics[width=\linewidth]{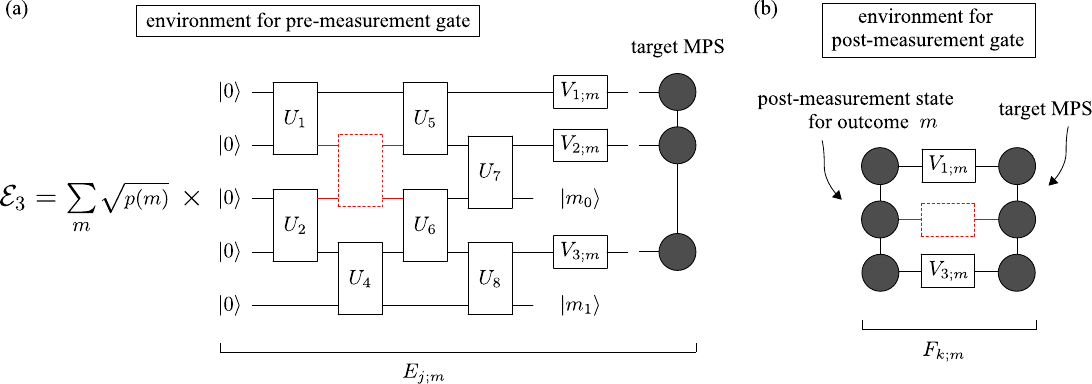}
  \caption{Optimizing dynamic quantum circuits with look-up table decoders for state preparation. Environment tensors are computed and used to update gates via singular value decomposition. (a) Environment tensor for a pre-measurement gate. Projector tensors are applied to ancilla qubits for each measurement outcome, $m$. All tensors, except the chosen gate, are then contracted. The resulting tensor is weighted by the probability of $m$ and summed over all possible outcomes. (b) Environment tensor for an adaptive gate applied post-measurement, corresponding to outcome $m$.}
  \label{environment}
\end{figure*}
\\ 
A standard figure of merit used in state preparation is the fidelity of the prepared state with the target state. Since the output of a randomly initialized DQC is an ensemble of states over the measurement outcomes, the fidelity is defined as $\bra{\psi_{\text{targ}}}\rho_{\text{prep}}\ket{\psi_{\text{targ}}}$, where $\ket{\psi_{\text{targ}}}$ is the target state and $\rho_{\text{prep}}$ is a density matrix representing the DQC ansatz. 
\\ 
\begin{figure}
  \centering
  \includegraphics[width=\linewidth]{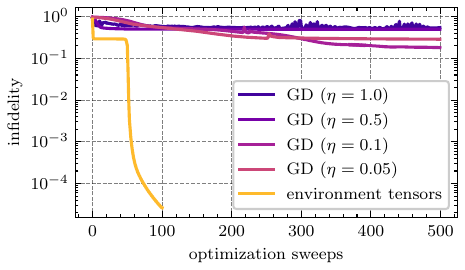}
  \caption{Comparison of gradient descent to the environment tensor method. We optimize a dynamic quantum circuit to learn the GHZ state on 8 qubits with two ancillae. For a range of learning rates, $\eta$, gradient descent either plateaus to a poor optimum or converges slowly, whereas the environment tensor method converges rapidly.}
  \label{gd_vs_svd}
\end{figure}
\\
A natural way to train DQCs for state preparation is to use infidelity as a cost function and perform gradient descent. However, this approach suffers from the (classical) barren plateau problem. Since vectors in large dimensional spaces have small overlaps, gradients of the infidelity for a randomly initialized DQC vanish exponentially with $n$. Our numerical experiments show (see Figure \ref{gd_vs_svd}) that gradient descent often fails to converge to optimal minima in reasonable times. Therefore, while we compute and report infidelity in our results, we instead minimize a related, but distinct quantity defined below, which enables us to use a more effective optimization method. 
\\ \\ 
Our algorithm optimizes DQCs by sweeping over both the pre- and post-measurement gates and performing globally optimal updates for each gate with respect to appropriate cost functions. This process continues until convergence has been reached or a maximum number of iterations completed.  
\\ \\ 
Consider the overlap between the target state and the state output by the DQC for measurement outcome $m$:  
\begin{align}
    o_m &= \bra{\psi_{\text{targ}}} \prod_k V_{k;m}\,\frac{\bra{m}}{\sqrt{p(m)}}\,\prod_j U_j\ket{0}^{\otimes (n+r)} \label{overlap},
\end{align}
where $U_j$ are pre-measurement gates, $V_{k;m}$ are post-measurement gates for outcome $m$ and $\ket{m}$ represents the state of the measured qubits. Probability of the outcome, $p(m)$, is used to ensure normalization. 
\\ \\ 
To train the pre-measurement gates, we minimize the following cost function: 
\begin{align}
    C_{\text{pre}} &= 1 - \left\lvert\sum_{m=1}^{2^r} p(m)\, o_m   \right\rvert.
\end{align}
We show in Appendix \ref{env} that this cost function approximately upper bounds the infidelity. The advantage of minimizing $C_{\text{pre}}$ instead of the infidelity is that the former can be minimized via a generalization of the environment tensor method \cite{evenbly_vidal}, which is more robust than gradient descent, converges faster, and is less sensitive to the barren plateau problem. 
\\ \\ 
Observe that $C_{\text{pre}}$ can be expressed in terms of a single pre-measurement gate, $U_j$, and its environment tensor, $\mathcal{E}_j$:
\begin{align}
    o_m &= \frac{\text{Tr}\left(E_{j;m}\, U_j\right)}{\sqrt{p(m)}}\\
    \Rightarrow C_{\text{env}} &= 1 - \left\lvert\text{Tr}\left(\mathcal{E}_j\, U_j\right) \right\rvert  \\
    \mathcal{E}_j &= \sum_m \sqrt{p(m)}\, E_{j;m} \label{env_eq},
\end{align}
where $E_{j;m}$ can be computed by thinking of $o_m$ as a tensor network, and contracting every tensor except $U_j$ (see Figure \ref{environment}(a)). 
\\ \\ 
Since the dynamic circuits we optimize over are shallow and the target state is assumed to have an efficient matrix product state (MPS) representation, both $E_{j;m}$ and $p(m)$ can be computed quickly on a classical computer and used to construct the environment $\mathcal{E}_j$. We then update $U_j$ using the singular value decomposition of the environment: 
\begin{align}
    \mathcal{E}_j &= XSY^{\dagger} \\ 
    U_j^{\text{new}} &= YX^{\dagger}
\end{align} 
Note that since $U_j$ is applied prior to measurement, it affects the outcome probabilities, $p(m)$ and, therefore, $\mathcal{E}_j$ is not strictly independent of $U_j$. However, our numerical experiments suggest that assuming $p(m)$ remains unchanged during a single gate update does not affect convergence of the algorithm. Under this assumption, the cost function is linear in $U_j$ and, therefore, the singular value decomposition provides the globally optimal update for the gate. 
\\
\\
To optimize the post-measurement gates for outcome $m$, we use a different cost function:
\begin{align}
    C_m &= 1 - \lvert o_m \rvert = 1 - \left\lvert\text{Tr}\left(F_{k;m}\,V_{k;m} \right) \right\rvert,
\end{align}
where $F_{k;m}$ is the standard environment tensor (see Figure \ref{environment}(b)). The optimal update for gate $V_{k;m}$ is also computed by singular value decomposition of its environment. We emphasize that for a look-up table decoder, the optimization of post-measurement gates for each measurement outcome is independent.  
\\ \\ 
A crucial hyperparameter of the DQC ansatz is the placement of the ancilla(e). In our optimization runs, we search over this hyperparameter in a greedy fashion. We find the placement of the first ancilla that results in the highest fidelity. Then we fix that ancilla and consider adding a second ancilla at a different location. This greedy search over hyperparameters incurs an $\mathcal{O}(n)$ overhead for the algorithm without explicit parallelization. 

\subsection{Results}

We apply our algorithm to three classes of states for which we believe explicit DQC constructions have not been studied in the literature.
\\
\begin{figure*}[!t]
  \centering
  \includegraphics[width=\linewidth]{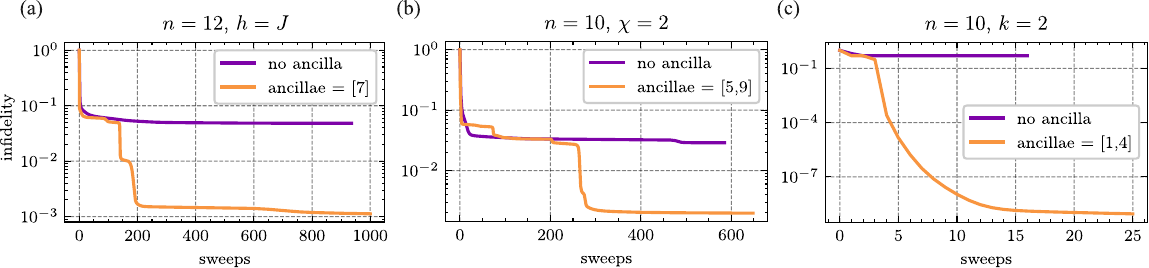}
  \caption{Optimizing static and dynamic quantum circuits of $n$ qubits (not including the ancillae) and depth 4 to prepare (a) ground state of the critical transverse field Ising model, (b) a generic matrix product state, and (c) a hard instance of subset states. In each case, the infidelity of state preparation is reduced by order(s) of magnitude through the introduction of a few ancillary qubits used for measurement and feedforward. Ancillae locations are specified in the legend. Each line represents the best result from 10 independent optimization runs.}
  \label{order_of_magnitude}
\end{figure*}
\\
The first class of states we study consists of ground states of the transverse field Ising (TFI) model at criticality,
\begin{align}
    H = -J\left(\sum_{\langle i,j\rangle} Z_iZ_j + g\sum_j X_j\right),
\end{align}
where $Z_j$ and $X_j$ are the Pauli $Z$ and $X$ operators respectively on the $j$th qubit. At $g=1$ the model undergoes a phase transition and the ground state has long-range correlations. As shown in Appendix \ref{scaling_section}, the entanglement entropy of the state grows as $\mathcal{O}(\log(n))$ while the complexity grows as $\mathcal{O}(n)$. Due to the gap between these two scaling relations, we can expect DQCs to have an advantage over static circuits of the same depth. 
\\ \\ 
The second class of states we consider are random matrix product states (MPS) with fixed bond dimension, $\chi$. These states have an $n$ independent entanglement entropy of $\mathcal{O}(\log{\chi})$ and complexity \cite{log_depth_mps} of $\mathcal{O}(\log{n})$. While DQC preparation of MPSs has been studied in the literature, it has been in the presence of symmetries \cite{mps_from_locc, mps_peps_from_locc}. We consider completely generic MPSs. 
\\
\\ 
Finally we study the recently introduced subset states, defined as an equal superposition of $k$ randomly selected computational basis states: 
\begin{align}
    \ket{\psi} &= \sum_{i=1}^k \frac{1}{\sqrt{k}} \ket{i}.
\end{align}
Subset states were introduced in the context of pseudoentanglement \cite{pseudoentanglement}. Their entanglement entropy is bounded by $\log{k}$ but for large enough $k$ the distribution of subset states well approximates the Haar distribution on quantum states. Note that while some subset states have low circuit complexity, others exhibit long-range entanglement and have complexity given by $\mathcal{O}(n)$. The GHZ state is an example of a $k=2$ subset state with linear complexity. We show in Appendix \ref{scaling_section} that as $n$ increases such hard instances of subset states become an exponentially large fraction of all subset states. 
\\ \\ 
Figure \ref{order_of_magnitude} shows the results for training DQCs for each class of states. We compare DQCs with static circuits of equivalent depth and find that in each case, one or two ancillae are sufficient to lower the infidelity by order(s) of magnitude. The improvement is particularly stark for the subset state we attempt to prepare. We attribute this to the greater gap between the scaling of entanglement entropy and circuit complexity for hard instances of these states. 
\\
\begin{figure}
  \centering
  \includegraphics[width=\linewidth]{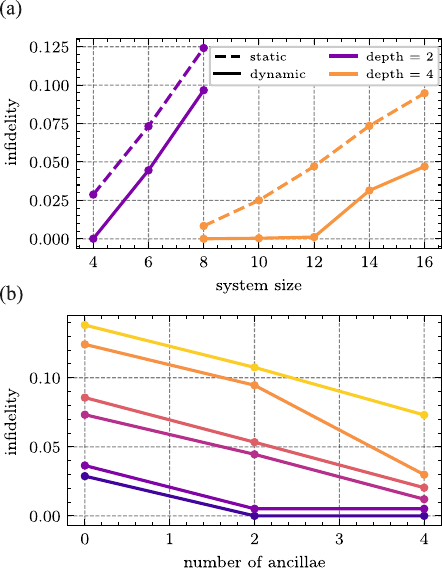}
  \caption{Infidelity of preparing the critical TFI state (a) as a function of system size at different depths, and (b) as a function of number of ancillae for 4 through 9 qubits. Note that the infidelity is largely independent of system size and drops monotonically with the number of ancillae.}
  \label{effect_of_ancillae}
\end{figure}
\\
Figure \ref{effect_of_ancillae}(a) shows that the DQC advantage is robust and consistent across system sizes and ansatz depths. This implies that, regardless of system size, utilizing a few ancilla qubits for measurement and feedforward can enhance the fidelity of state preparation. The data points for dynamic circuits are obtained by optimizing ansatze with a single ancilla to prepare the critical TFI state. Figure \ref{effect_of_ancillae}(b) demonstrates that while adding more ancillae steadily reduces infidelity, a threshold exists beyond which additional ancillae have little effect. 
\\ 
\begin{figure}
  \centering
  \includegraphics[width=\linewidth]{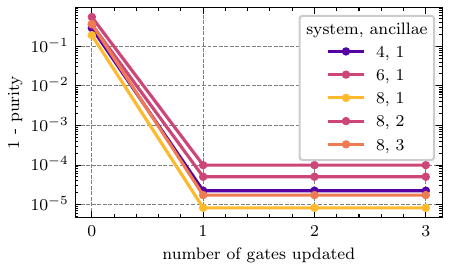}
  \caption{For a range of system sizes and numbers of ancillae, the purity of a randomly initialized dynamic quantum circuit saturates to a value close to 1 after a single gate update.}
  \label{purities}
\end{figure}
\\ 
As noted earlier, the output of a randomly initialized DQC is a mixed state. An important question is how the purity of this state evolves during training. Numerical experiments indicate that the purity rapidly saturates to a value close to 1. Figure \ref{purities} shows that after performing just a single globally optimal update of a pre-measurement gate, the purities have already reached their maximum values. This pattern holds across different system sizes and numbers of ancillae.  

\section{DQC with many ancillae \label{extensive_ancillae}}

The literature on state preparation with dynamic quantum circuits has so far focused on DQCs that offer constant depth preparations of states. Such constructions promise impressive improvements in terms of fidelity. But, in all known instances, the number of ancillae required scales linearly with system size. This makes a look-up table decoder infeasible since one must store and train $\mathcal{O}(2^n)$ number of parameters. 
\\ \\ 
The challenge of decoding exponentially many syndromes has already been studied in the context of quantum error correction \cite{decoder_review}. In this section we propose two methods to decode DQCs with an extensive number of ancillae, by drawing inspiration from decoders used for error correction. 

\subsection{Neural network decoder}

The classical decoder for a dynamic circuit is a function from $r$ integers to $\mathcal{O}(\text{poly}(n))$ real numbers. The universal approximation theorem \cite{nn_universal} tells us that with enough parameters a neural network can approximate this function. With this in mind, we propose a feedforward neural network decoder \cite{nn_decoder} with an input node for each ancilla and output nodes for parameters of the adaptive circuit. We assume that a poly$(n)$ number of hidden neurons suffice to approximate the decoding function. 
\\ 
\\
We perform stochastic gradient descent on the following loss function to train the neural network: 
\begin{align}
    L &= \sum_{m\in M} \left\lvert \bra{\psi_{\text{targ}}} \prod_k f(m)_k\,\frac{\bra{m}}{\sqrt{p(m)}}\,\prod_j U_j\ket{0}^{\otimes (n+r)} \right\rvert^2, 
\end{align}
where $f(m)_k$ represents the $k$th gate returned by the neural network given outcome $m$. For each epoch of training, instead of summing over all measurement outcomes to compute the loss, we stochastically generate a batch of outcomes, $M$, sampled by measuring $\prod_j U_j \ket{0}^{n+r}$. Our results indicate that training succeeds even when $|M| \ll 2^r$. The gradients with respect to weights and biases of the neural network are computed via automatic differentiation. 
\\ 
\begin{figure}
  \centering
  \includegraphics[width=\linewidth]{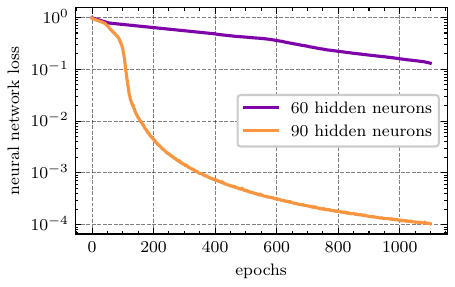}
  \caption{Training a feedforward neural network decoder to prepare a 6 qubit GHZ state. Increasing the number of hidden neurons allows the neural network to converge faster to an approximation of the decoding function.}
  \label{nn}
\end{figure}
\\
In Figure \ref{nn} we show that a feedforward neural network can be trained to decode measurement outcomes for a DQC preparing the GHZ state. The pre-measurement circuit is taken from \cite{aklt_from_locc} and constructs small GHZ states on local three qubit patches. The boundaries of neighboring patches are then measured in the Bell basis and the outcome passed to the neural network. The plots shows that increasing the number of hidden neurons improves convergence of the decoder. 
\\ 
\begin{figure*}[!t]
  \centering
  \includegraphics[width=0.75\linewidth]{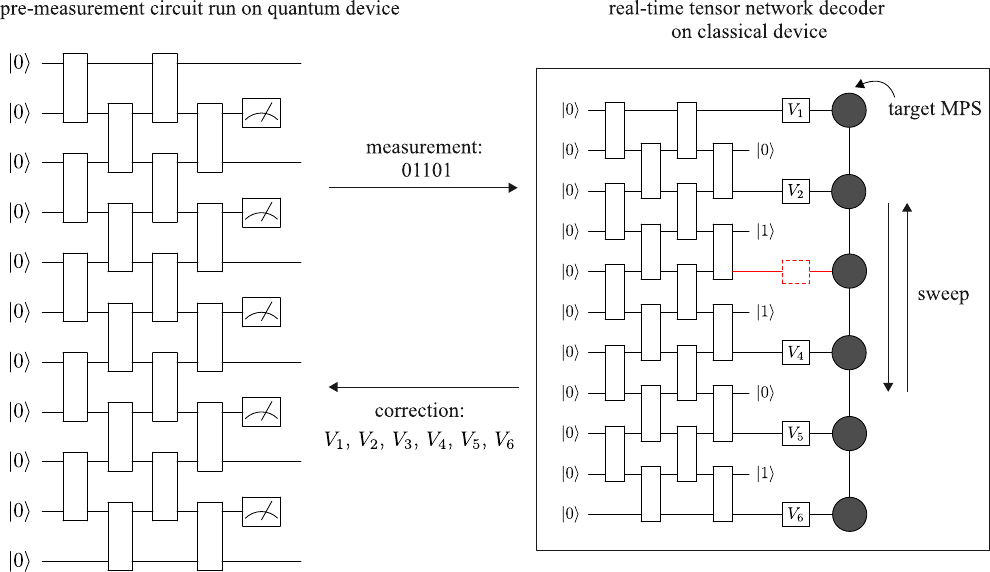}
  \caption{Real time decoding of a dynamic circuit for state preparation. The pre-measurement circuit is executed on a quantum computer and the ancillae are measured. The outcome is sent to a classical device that simulates the post-measurement state, the adaptive circuit and the target state using tensor networks. Each gate in the adaptive circuit is optimized by computing its environment tensor and performing singular value decomposition.}
  \label{real_time_decoding}
\end{figure*}
\\ 
An important future direction is to learn suitable pre-measurement circuits for the neural network decoder given arbitrary target states. Although this concept is not explored in the current work, we believe a stochastic version of the environment tensor method described in Section \ref{few_ancillae} could be effective. Specifically, the sum in Equation \eqref{env_eq} can be computed over $\text{poly}(n)$ sampled measurement outcomes rather than all $2^r$ outcomes, allowing the environment tensor-based updates to be performed in polynomial time.   

\subsection{Real-time decoding \label{real_time_section}}

An alternative method for dealing with an exponentially large number of measurement outcomes is to relax the condition that the decoder must succeed for all possible outcomes. In fact it is sufficient to find the adaptive circuit only for the measurement outcome that has occurred on the device. Commonly used decoders in quantum error correction, like minimum weight perfect matching (MWPM) \cite{mwpm}, similarly decode only the observed syndrome
in real-time on a classical device next to the quantum computer. 
\\ 
\begin{figure}
  \centering
  \includegraphics[width=\linewidth]{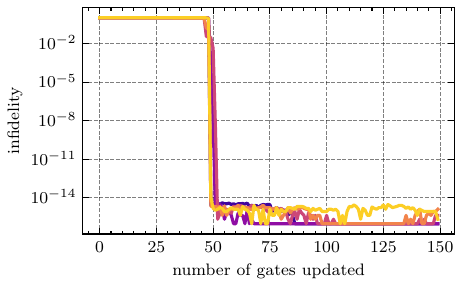}
  \caption{Real time decoding of a 50 qubit GHZ state for five randomly selected measurement outcomes. In each case, the adaptive circuit is found within a single sweep of the gates.}
  \label{real_time}
\end{figure}
\\ 
Figure \ref{real_time_decoding} describes a real-time decoding protocol for state preparation with dynamic circuits. The pre-measurement circuit is run on a quantum device and the ancillae are measured with the outcome passed to a classical device. Both the post-measurement state and the target state are then reconstructed as matrix product states, connected by parametrized gates. We then sweep over the gates and update each using the singular value decomposition of its environment tensor. Once optimized, the resulting adaptive circuit is passed to the quantum device and applied to the unmeasured qubits. 
\\ \\ 
We benchmark this protocol by simulating the decoding of a 50 qubit GHZ state. The pre-measurement circuit is again taken from \cite{aklt_from_locc} and the adaptive circuit optimized over consists of single qubit rotations. Figure \ref{real_time} shows that for several randomly chosen measurement outcomes, the correct adaptive circuit is found within a single sweep of the gates. This speed of convergence is essential for the success of real-time decoding since the unmeasured qubits decohere while the decoding takes place.    
\\ \\ 
The decoding protocol we describe converges quickly under the assumption that the MPS representations of the target state and post-measurement states have low bond dimension, and that the adaptive circuit is shallow. In Appendix \ref{stabilizers} we show that these requirements are satisfied when the target is a stabilizer state with low entanglement. Specifically, we prove that measuring a subset of the qubits in a stabilizer state results in post-measurement states that can be mapped onto each other by single qubit Pauli gates. 
\\ \\ 
Therefore, to find a DQC construction of a stabilizer state with low entanglement, it is sufficient to optimize over a Clifford pre-measurement circuit \cite{clifford_optimize1, clifford_optimize2} such that for a single fixed measurement outcome, it prepares the target state with high fidelity. Theorem \ref{stabilizer_thm} then guarantees that for any other measurement outcome, the real-time decoding protocol described above can rapidly find an adaptive circuit that maps the post-measurement state to the target state. 
\\ \\ 
This approach has immediate practical applications since stabilizer states with low entanglement encompass many physically relevant quantum states. This includes the GHZ state as well as ground states of stabilizer Hamiltonians such as those describing fractons \cite{toric_code, fractons}. However, extending this framework beyond stabilizer states remains an interesting future direction.     

\section{Discussion and outlook \label{outlook}} 

Our work demonstrates that it is possible to use classical algorithms to learn dynamic quantum circuits for quantum state preparation. We find that by utilizing a few ancilla qubits for measurement and feedforward, one can achieve significant improvements in state preparation fidelity. This means that in the presence of low readout error and effective dynamical decoupling \cite{cycle, pauli_learning, randomized, readout}, there is no real drawback and much to gain by using DQCs to prepare states.
\\ \\ 
In order to make our algorithms scalable, we extended the environment tensor optimization technique to a setting that includes measurement and feedforward. This allowed us to avoid the barren plateau problem that plagues gradient based methods. We have also developed methods to overcome the exponential growth of measurement outcomes in DQCs with an extensive number of ancillae. 
\\ \\ 
An immediate extension of this work would be to apply our algorithms to higher dimensions, which offer a richer landscape of long-range entangled states. A generalization of the ideas in this paper to unitary and channel compilation would be natural as well. There is also room for exploring applications of DQCs to machine learning problems. 
\\ \\ 
While our focus in this work has been on classical algorithms for training DQCs, studying variational quantum algorithms -- where part of the optimization takes place on a quantum computer -- might prove fruitful. In fact quantum optimization might become necessary for higher dimensional problems, where tensor network techniques are far less efficient. 
In a recent work \cite{vqe_locc}, the authors have built some of the necessary tools to implement such algorithms. 
\\ \\ 
From a theoretical perspective, understanding the limitations of the dynamic circuit ansatz remains crucial. An interesting open question is understanding what states can be prepared to high fidelity by dynamic quantum circuits of a given depth. Our algorithmic approach may facilitate numerical explorations of this question, leading to a theory of dynamical quantum circuit complexity.  
\\ \\ 
As dynamic quantum circuits are necessary en route to fault-tolerant quantum computation, it is natural to explore what other applications are possible with this new technology. We expect our algorithmic approach to be a key enabling technique going forward in this quest. 

\section*{Acknowledgements}
F.A. was supported by Laboratory Directed Research and Development program of Los Alamos National Laboratory under project number 20230049DR. B.K.C acknowledges support from the NSF Quantum Leap Challenge Institute for Hybrid Quantum Architectures and Networks (NSF Award No. 2016136). Tensor network computations were done with the ITensors.jl library \cite{itensor}. F.A. thanks Shivan Mittal, Felix Leditzky and Lukasz Cincio for helpful discussions. B.K.C and F.A. also thank Abid Khan for helpful discussions. 

\bibliography{lib.bib}

\appendix

\section{Cost function \label{env}}

In Section \ref{few_ancillae}, we use the following cost function to optimize pre-measurement gates: 
\begin{align}
    C_{\text{pre}} &= 1 - \left\lvert\sum_{m=1}^{2^r} p(m)\, o_m   \right\rvert,
\end{align}
where $p(m)$ is the probability of outcome $m$ and $o_m$ is the overlap of the corresponding post-measurement state with the target state. We can use the triangle inequality to get the following bound:
\begin{align}
    \left\lvert\sum_{m=1}^{2^r} p(m)\, o_m   \right\rvert &\le \sum_{m=1}^{2^r}p_m\, \lvert o_m \rvert \\ 
    \Rightarrow\qquad C_{\text{pre}} &\ge 1 - \sum_{m=1}^{2^r}p_m\, \lvert o_m \rvert
\end{align}
The expression on the right is equal to the square root infidelity: $1-\sqrt{\bra{\psi_{\text{targ}}}\rho_{\text{prep}}\ket{\psi_{\text{targ}}}}$. This quantity serves as a measure of dissimilarity between quantum states, and our cost function upper bounds it. However, it is only the true infidelity, $1-\bra{\psi_{\text{targ}}}\rho_{\text{prep}}\ket{\psi_{\text{targ}}}$, that has an interpretation as a statistical distance \cite{nielsen_chuang}. 
\\ \\ 
To relate $C_{\text{pre}}$ to infidelity, we note that when $\lvert o_m \rvert$ approaches 1, it holds that $\lvert o_m \rvert \approx \lvert o_m \rvert^2$. Thus, in the low infidelity regime, infidelity is well approximated by the square root infidelity, allowing us to approximately upper bound the former in terms of $C_{\text{pre}}$: 
\begin{align}
    C_{\text{pre}} \gtrsim 1 - \sum_{m=1}^{2^r}p_m\, \lvert o_m \rvert^2 = \text{infidelity}.
\end{align}

\section{Scaling of entanglement and complexity \label{scaling_section}} 

Enhancing a quantum circuit of some depth with measurement and feedforward does not increase its entangling power. This is because measurement and feedforward is an example of a local operation with classical communication (LOCC) under which bipartite entanglement is non-increasing \cite{locc_basics}. As a result, for highly entangled target states, dynamic circuits provide no advantage over their static counterparts. 
\\ \\ 
Instead LOCC allows DQCs to introduce long-range correlations which may be disallowed for static circuits of the same depth due to the Lieb-Robinson bound \cite{lieb_robinson}. Therefore, a method to determine if one should use DQCs to prepare a class of states is to compare the scaling of its \textit{entanglement depth} and its \textit{preparation depth} or complexity. Here by entanglement depth we mean the minimum depth circuit necessary to produce a state with as much bipartite entanglement as the target state across any bipartition. We define complexity to be the minimum depth circuit necessary to prepare the state to a chosen fidelity threshold.  
\\ 
\begin{figure*}[!t]
  \centering
  \includegraphics[width=\linewidth]{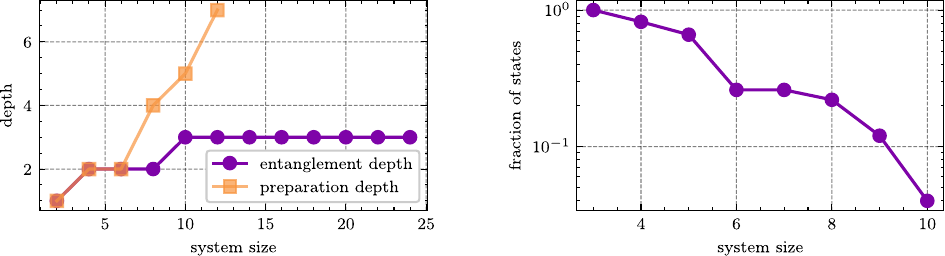}
  \caption{Scaling of entanglement and complexity in target states. (a) The preparation depth of a critical TFI state grows linearly with system size, while the entanglement depth only grows logarithmically. The preparation depth is computed by optimizing a static circuit to a fidelity threshold of $0.999$. (b) The fraction of 50 randomly generated $k=2$ subset state samples that can be prepared with a depth 2 static circuit decays exponentially with system size.}
  \label{scaling}
\end{figure*}
\\
In Figure \ref{scaling}(a), we present numerical experiments showing that the entanglement depth of the critical transverse field Ising model grows slower than its preparation depth, due to the long-range correlations present at criticality. The existence of this gap allows room for a DQC advantage. 
\\ \\ 
The equivalent analysis for subset states is made difficult by the large variance in the complexity across instances. While the entanglement depth is given by $\log{k}$ for states with subset size $k$, complexity can vary from $\mathcal{O}(1)$ for simple subsets to $\mathcal{O}(n)$ for GHZ-like states. 
\\ \\
However, our numerical experiments demonstrate that instances with greater preparation depth than entanglement depth quickly become the norm as system size increases. Figure \ref{scaling}(b) shows that if we fix the subset size to $k=2$, the fraction of sampled subset states that can be prepared by a depth 2 static circuit decays exponentially with system size. This suggests that for large system sizes, a random subset state will very likely exhibit long-range correlation. DQCs can improve the fidelity of preparing such states. 

\section{Decoding Clifford DQCs \label{stabilizers}}

The real-time decoding protocol described in Section \ref{real_time_section} works under the assumption that one knows or can find a pre-measurement circuit such that every post-measurement state can be related to the target state via a simple circuit. Here we show that this is always possible for stabilizer target states and stabilizer dynamic quantum circuits.
\\ \\ 
Consider an $n$-qubit stabilizer state $\ket{\psi}$ with stabilizers: $S=\langle g_1,g_2,\dots, g_n \rangle$, where $g_i$ are independent elements of the Pauli group on $n$-qubits that mutually commute. Suppose we measure the $k$th qubit in the computational basis and consider the two post-measurement states: $\ket{\phi_0}$ and $\ket{\phi_1}$. We want to show that for every $\ket{\psi}$, there exists an element $g$ of the Pauli group $\mathcal{P}_n$ such that $\ket{\phi_0} = g\ket{\phi_1}$. 
\\ \\ 
First observe that $\ket{\phi_0}$ and $\ket{\phi_1}$ are stabilizer states that share all stabilizer generators except one \cite{nielsen_chuang}: $\langle Z_k, g_2, \dots, g_n \rangle$ and $\langle -Z_k, g_2, \dots, g_n \rangle$. We will, therefore, show that a Pauli string is sufficient to map one set of generators into the other. For this we will need to use the check matrix representation of stabilizer states. 
\\ \\ 
An $n$-qubit Pauli string $g\in \mathcal{P}_n$ has the binary representation $r(g) = (r_x(g), r_z(g))$. $r_x(g)$ is an $n$-bit binary whose $i$th bit is 1 if $g$ applies an $X$ or $Y$ gate on the $i$th qubit and zero otherwise. $r_z(g)$ is also an $n$-bit binary whose $i$th bit is 1 if $g$ applies a $Z$ or $Y$ gate on the $i$th qubit.
\\ \\ 
Two Pauli strings, $g,g'$ commute if and only if $r(g)\Lambda r(g')=0$ where $\Lambda$ is a $2n\times 2n$ matrix that \textit{swaps} the $X$ and $Z$ degrees of freedom: 
    \begin{align*}
        \Lambda &= \begin{pmatrix}
            0 & I_n \\ 
            I_n & 0
        \end{pmatrix},
    \end{align*}
and the inner product is computed modulo 2. $r(g)\Lambda r(g')$ essentially counts the number of qubits at which $g,g'$ share distinct, non-identity Pauli gates.  
\\ \\ 
The check matrix of a stabilizer state is an $n\times 2n$ matrix whose rows contain the binary representation of each stabilizer generator of $\ket{\psi}$: 
\begin{align}
    R_{\psi} &= \begin{pmatrix}
        r(g_1) \\ 
        r(g_2) \\
        \vdots \\ 
        r(g_n)
    \end{pmatrix}
\end{align}

With this setting, the proof of our statement follows from these lemma:

\begin{lemma}\label{indep}
If $g_1,g_2,\dots,g_n$ are independent, the rows of $R_{\psi}$ are linearly independent. 
\end{lemma}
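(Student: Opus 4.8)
The plan is to exploit the fact that the binary representation $r$ is, up to phases, a group homomorphism from the Pauli group $\mathcal{P}_n$ onto the additive group $\mathbb{F}_2^{2n}$. Since multiplying two Pauli strings adds their $X$ and $Z$ bit patterns modulo $2$ (the only interaction being a phase picked up from anticommutation on shared qubits), one has $r(gg') = r(g) \oplus r(g')$, where $\oplus$ denotes addition mod $2$. The kernel of this map is precisely the set of scalar multiples of the identity, $\{cI : c \in \{\pm 1, \pm i\}\}$. This homomorphism is the bridge between the group-theoretic notion of independence of the $g_i$ and the linear-algebraic notion of linear independence of the rows of $R_{\psi}$.

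First I would argue by contradiction. Suppose the rows $r(g_1),\dots,r(g_n)$ are linearly dependent over $\mathbb{F}_2$. Then there is a nonempty subset $T\subseteq\{1,\dots,n\}$ with $\bigoplus_{i\in T} r(g_i)=0$. Applying the homomorphism property gives $r\!\left(\prod_{i\in T} g_i\right)=0$, so the product lies in the kernel, i.e.\ $\prod_{i\in T} g_i = cI$ for some phase $c$.

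Next I would pin down the phase $c$. Because the $g_i$ mutually commute and each is Hermitian, their product is Hermitian, which forces $c\in\{+1,-1\}$ and rules out $c=\pm i$. If $c=+1$, then choosing any $i_0\in T$ and using $g_{i_0}^2=I$ shows $g_{i_0}=\prod_{i\in T\setminus\{i_0\}} g_i$, directly contradicting the independence of the generators (and in the singleton case $g_{i_0}=I$, which is not a legitimate generator). If $c=-1$, then $-I$ belongs to the stabilizer group $S=\langle g_1,\dots,g_n\rangle$, whence $\ket{\psi}=(-I)\ket{\psi}=-\ket{\psi}$ forces $\ket{\psi}=0$, contradicting that $\ket{\psi}$ is a genuine stabilizer state.

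The main obstacle is the phase bookkeeping in the previous step: the binary representation deliberately discards phase information, so linear dependence yields only a product \emph{proportional} to the identity, and the crux of the argument is excluding the $-I$ and $\pm i$ possibilities. Hermiticity of the commuting generators disposes of $\pm i$, while the nonvanishing of a valid stabilizer state disposes of $-I$; together these reduce the dependence to the forbidden relation $\prod_{i\in T} g_i = I$, completing the contradiction and establishing that the rows of $R_{\psi}$ are linearly independent.
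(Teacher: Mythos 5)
Your proof is correct and follows essentially the same route as the paper's: both rely on the homomorphism property $r(gg') = r(g) + r(g')$ to convert an $\mathbb{F}_2$-linear dependence of the rows into a product of generators lying in the kernel $\{\pm I, \pm iI\}$, and then exclude the nontrivial phases because they cannot stabilize a nonzero state, contradicting independence. Your phase bookkeeping (Hermiticity to rule out $\pm i$, nonvanishing of $\ket{\psi}$ to rule out $-I$) is a slightly more explicit rendering of the paper's remark that $-I$ and $\pm iI$ are never stabilizers of a non-zero state, but it is the same argument.
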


\begin{proof}
Note that in the binary representation, product of Pauli strings turn into addition: $r(g_1g_2) = r(g_1) + r(g_2)$, where the addition is modulo 2. Also, exponentiation turns into scaling: $r(g^a)=ar(g)$. 
\\ \\ 
Since $g_1,g_2,\dots,g_n$ are independent, the only way to satisfy $\prod_i g_i^{a_i} = I$ is to have $a_i=0$ for every $i$. Taking the binary representation of both sides, we get: 
\begin{align*}
    r\left(\prod_ig_i^{a_i}\right) &= r(I) \\ 
    \sum_i a_i r(g_i) &= 0. 
\end{align*}
The condition on $a_i$ then translates to $r(g_1),r(g_2),\dots,r(g_n)$ forming a linearly independent set of vectors.
\\ \\ 
Note that since $-I$ (and by extension $\pm iI$) are never stabilizers of a non-zero state, this is also an equivalence statement since the only $g$ with $r(g)=0$ is $I$, which means the second line above implies the first.  
\end{proof} 

\begin{lemma}\label{anticommute}
    If $g_1,g_2,\dots,g_n$ are independent, for every $i$ in $1,\dots,n$, there exists $g\in \mathcal{P}_n$ that anti-commutes with $g_i$ but commutes with every other generator. 
\end{lemma}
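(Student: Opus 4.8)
The plan is to recast the commutation constraints as a single linear system over $\mathbb{F}_2$ and then invoke the full-rank property established in Lemma~\ref{indep}. Recall that $g$ commutes with a generator $g_j$ precisely when $r(g_j)\,\Lambda\, r(g) = 0 \pmod 2$, and anti-commutes with it when this quantity equals $1$. Hence finding the desired $g$ is equivalent to finding a binary vector $v = r(g) \in \mathbb{F}_2^{2n}$ that satisfies $r(g_i)\,\Lambda\, v = 1$ together with $r(g_j)\,\Lambda\, v = 0$ for every $j \neq i$. Stacking these $n$ scalar conditions, they are exactly the matrix equation $R_\psi\,\Lambda\, v = e_i$, where $e_i$ denotes the $i$th standard basis vector of $\mathbb{F}_2^n$.

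First I would note that $\Lambda$ is invertible over $\mathbb{F}_2$; its block structure gives $\Lambda^2 = I_{2n}$, so right-multiplication by $\Lambda$ preserves rank. By Lemma~\ref{indep}, the independence of $g_1, \dots, g_n$ makes the rows of $R_\psi$ linearly independent, so $R_\psi$ has full row rank $n$, and therefore $R_\psi\,\Lambda$ does as well.

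Next, since $R_\psi\,\Lambda$ is an $n \times 2n$ matrix of rank $n$, the associated linear map $v \mapsto R_\psi\,\Lambda\, v$ from $\mathbb{F}_2^{2n}$ to $\mathbb{F}_2^n$ is surjective. In particular $e_i$ lies in its image, so a solution $v$ exists; any Pauli string $g$ with $r(g) = v$ then anti-commutes with $g_i$ and commutes with all the remaining generators, which is the claim. The phase of $g$ plays no role, since commutation depends only on $r(g)$.

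I do not expect a serious obstacle here: once the constraints are phrased symplectically, the argument is short. The only point requiring a moment of care is verifying that $\Lambda$ acts bijectively on $\mathbb{F}_2^{2n}$, so that the full row rank of $R_\psi$ transfers to $R_\psi\,\Lambda$; this is immediate from $\Lambda^2 = I_{2n}$. The conceptual crux is recognizing that full row rank of the check matrix is exactly what guarantees every such single-generator constraint system is solvable, and this is precisely where Lemma~\ref{indep} does the work.
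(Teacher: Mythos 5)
Your proof is correct and follows essentially the same route as the paper: both reduce the problem to solving $R_\psi \Lambda v = e_i$ over $\mathbb{F}_2$ and use Lemma~\ref{indep} (full row rank of $R_\psi$) together with the invertibility of $\Lambda$ to conclude solvability, the only cosmetic difference being that the paper writes down the explicit solution $r = \Lambda^{-1}R_\psi^{-1}e_i$ via a right inverse while you argue surjectivity of $v \mapsto R_\psi\Lambda v$. If anything, your justification that $\Lambda$ is invertible via $\Lambda^2 = I_{2n}$ is cleaner than the paper's appeal to its eigenvalues, which is not really meaningful over $\mathbb{F}_2$.
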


\begin{proof}
    Given $g_i$ we provide a construction for the $g$ that satisfies this property. 
    \\ \\ 
    Suppose we find an $r$ that solves the equation: $R_{\psi}\Lambda r = e_i$, where $e_i$ is a vector of length $n$ that has 1 as the $i$th entry and 0 everywhere else. $r$ then represents the binary representation of a Pauli string that anti-commutes only with the $i$th generator of $\ket{\psi}$ but commutes with all the rest.  
    \\ \\ 
    Since the generators are assumed to be independent, by Lemma \ref{indep}, $R_{\psi}$ has $n$ linearly independent rows. Therefore, it has a right inverse, $R_{\psi}^{-1}$. $\Lambda$ is a symmetric matrix with all eigenvalues equal to 1 and is, therefore, invertible. We see then that $r=\Lambda^{-1}R_{\psi}^{-1}e_i$ provides at least one solution to the equation we want to solve.  
\end{proof}

\begin{theorem}\label{stabilizer_thm}
For every $n$-qubit stabilizer state $\ket{\psi}$ and $k\in[1,n]$, there exists $g\in\mathcal{P}_n$ such that $\ket{\phi_0}=g\ket{\phi_1}$, where $\ket{\phi_0}$ and $\ket{\phi_1}$ are the two post-measurement states resulting from measuring the $k$-th qubit of $\ket{\psi}$.  
\end{theorem}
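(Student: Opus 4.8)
The plan is to reduce the theorem to a direct application of Lemma~\ref{anticommute}, taking as given (from \cite{nielsen_chuang}, as stated above) that the two post-measurement states are the stabilizer states generated by $\langle Z_k, g_2,\dots,g_n\rangle$ and $\langle -Z_k, g_2,\dots,g_n\rangle$, differing only in the sign of a single generator. First I would note that each of these sets consists of $n$ independent generators, since each defines a pure $n$-qubit stabilizer state; in particular Lemma~\ref{indep} applies to the set $\{Z_k, g_2,\dots,g_n\}$, so the rows of its check matrix are linearly independent and the hypotheses of Lemma~\ref{anticommute} are met.

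Next I would recall the behaviour of Pauli conjugation: for Hermitian Pauli strings $g,h$ one has $g^{-1}=g$, so $ghg^{-1}=h$ when $g,h$ commute and $ghg^{-1}=-h$ when they anticommute. Consequently, if $h$ stabilizes $\ket{\phi_1}$ then $ghg^{-1}$ stabilizes $g\ket{\phi_1}$, and the stabilizer group of $g\ket{\phi_1}$ is generated by the conjugates of $-Z_k, g_2,\dots,g_n$. The goal thus becomes purely combinatorial: find a single Pauli $g$ that flips the sign of the $-Z_k$ generator (i.e.\ anticommutes with $Z_k$) while leaving every $g_j$, $j\ge 2$, fixed (i.e.\ commutes with each of them).

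This is exactly what Lemma~\ref{anticommute} provides, applied to the independent generating set $\{Z_k, g_2,\dots,g_n\}$ with the distinguished index chosen to label $Z_k$: it yields a $g\in\mathcal{P}_n$ anticommuting with $Z_k$ and commuting with all $g_j$, $j\ge 2$. For this $g$, conjugation sends $\langle -Z_k, g_2,\dots,g_n\rangle$ to $\langle Z_k, g_2,\dots,g_n\rangle$, so $g\ket{\phi_1}$ and $\ket{\phi_0}$ are stabilized by the same group and are therefore the same state. I would close by remarking that equality of stabilizer groups pins down the state only up to a global phase, which is physically irrelevant and (since stabilizer amplitudes lie in $\{\pm1,\pm i\}$ up to normalization) may be absorbed into the phase part of $g$, which the Pauli group $\mathcal{P}_n$ already contains.

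Because the genuine content lives in the two lemmas, the argument itself is short, and I do not expect a real obstacle. The one point requiring care is the bookkeeping around the measurement update: justifying that the post-measurement generators can be written in the common form $\langle \pm Z_k, g_2,\dots,g_n\rangle$ with a shared set $g_2,\dots,g_n$ each commuting with $Z_k$. This rests on the standard stabilizer-measurement procedure, in which one first re-chooses generators so that at most one anticommutes with $Z_k$ before replacing it by $\pm Z_k$; once this normal form is granted, the independence claim and the conjugation step follow immediately.
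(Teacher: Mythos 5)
Your proposal is correct and follows essentially the same route as the paper: the paper's proof likewise invokes Lemma~\ref{anticommute} on the post-measurement generating set $\langle Z_k, g_2,\dots,g_n\rangle$ to obtain a Pauli $g$ anticommuting with $Z_k$ and commuting with the remaining generators, and concludes that this $g$ maps one post-measurement state to the other. Your write-up simply makes explicit the steps the paper leaves implicit (the conjugation argument, uniqueness of a state given its full stabilizer group, and the global-phase bookkeeping), which is a faithful elaboration rather than a different approach.
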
  

\begin{proof}
We use the method in Lemma \ref{anticommute} to find a $g$ that anti-commutes with $Z_k$ but commutes with all other stabilizers. This $g$ then maps $\ket{\phi_0}$ to $\ket{\phi_1}$.  
\end{proof}

Finding a class of parametrized, non-Clifford circuits for which one can prove a similar result would make our real-time decoding protocol applicable to non-stabilizer states.  

\end{document}